\documentclass[12pt]{amsart}
\usepackage{bbm}
\usepackage{mathrsfs}
\usepackage{amsfonts}
\setlength{\textheight}{8.5in} \setlength{\textwidth}{6.4in}
\setlength{\hoffset}{-0.75in}
%%%%%%%%%%%%%%%%%%%%%%%%%%%%%%%%%%%%%%%%%%%%%%%%%%%%%%%%%%%%%%%%%%%%%%%%%%%%%%%%%%%%%%%%%%%%%%%%%%%%%%%%%%%%%%%%%%%%%%%%%%%%%%
\usepackage{amscd}
\usepackage{amsmath}
\usepackage{amsthm}
\usepackage{amsfonts}
\usepackage{amssymb}
\usepackage{graphicx}
\usepackage{amsmath,amsthm,hyperref} %加了 hyperref 后, 用 PDFLaTeX 编译得到的文件有超链接, 方便屏幕阅读
\usepackage{fancyhdr}
%%%%%设置页眉
\pagestyle{fancy} \lhead{}
 \chead[]{}
  \rhead{}
  %%0pt表示删除页眉下划线

%%%%%设置页眉

\setcounter{footnote}{0}

\numberwithin{equation}{section}

\newtheorem{theorem}{Theorem}[section]

\newtheorem{proposition}[theorem]{Proposition}

\def\D(a,r){\Delta(a,r)}
\def\D\m\D(a,r_0){\Delta\setminus\Delta(a,r_0)}
\def\D{\Delta}

\def\D{\Bbb D}

\newcommand{\dd}[1]{\, {\mathrm d}{#1}}
\newcommand{\sinc}{\mathrm{sinc}}
\newcommand{\e}{\mathrm{e}}
\newcommand{\im}{\mathrm{i}}

\begin{document}
\title{Sampling Error Analysis and  Some Properties of
   Non-bandlimited Signals  That Are Reconstructed by Generalized Sinc Functions}
\author{Youfa Li, Qiuhui Chen,  Tao Qian and Yi Wang}

\address{Youfa Li, College of
Mathematics and Information Sciences, Guangxi University, Nanning,
CHINA; Department of Mathematics, Faculty of Science and Technology,
University of Macau, Taipa, Macao, CHINA.}
\email{youfalee@hotmail.com}
\address{Qiuhui Chen, Cisco School of Informatics, Guangdong University of
Foreign Studies, Guangzhou,  CHINA.} \email{chenqiuhui@hotmail.com}

\address{Tao Qian, Department of Mathematics, Faculty of Science
and Technology, University of Macau, Taipa, Macao, CHINA.}
\email{fsttq@umac.mo}

 \address{Yi Wang, corresponding author, Department of Mathematics, Auburn University at Montgomery,
 P.O. Box 244023, Montgomery, AL 36124-4023
 USA.  } \email{ywang2@aum.edu}

\begin{abstract}
Recently efforts have been made to use generalized sinc functions to perfectly reconstruct various kinds of non-bandlimited signals. As a consequence, perfect reconstruction sampling formulas have been established using such generalized sinc functions.
This  article  studies the error of the reconstructed non-bandlimited signal when an adaptive truncation scheme is employed. Further, when there are noises present in the samples, estimation on  the expectation and variance of the error pertinent to the reconstructed signal is also given.    Finally discussed are   the reproducing properties and the Sobolev smoothness of functions in the space of non-bandlimited signals that admits  such a sampling formula.

\end{abstract}

 \subjclass[2000]{}
\keywords{generalized sinc function, non-bandlimited signal, sampling theorem,
truncated error, noisy error, reproducing property, Sobolev
smoothness}
\thanks{Li is supported by the
National Natural Science Foundation of China (Grant No.11126343),
Natural Scientific Project of Guangxi University under grant
XBZ110572 and by Macao Science and Technology Fund FDCT/056/2010/A3
for his Postdoctoral research. Chen is supported by NSFC under grant 61072126 and by
Natural Science Foundation of Guangdong Province under grant
S2011010004986. Qian is supported by Grant of University of Macau
UL017/08-Y3/MAT/QT01/FST and by Macao Science and Technology Fund
FDCT/056/2010/A3.}

\maketitle

%%%%%%%  NEW SECTION  %%%%%%%%%%%%%%%%%%%%%%%%%%%%%
\bigskip
\section{Introduction}
\medskip
%%%%%%%%%%%%%%%%%%%%%%%%%%%%%%%%%%%%%%%%%%%%%%%%%%%
We begin by establishing some  notations used in the paper.
Let $\mathbb{N}$ be the set of natural numbers, $\mathbb{Z}$ be the set of integers, and $\mathbb{Z}_+:=\{0\}\cup \mathbb{N}$. For a positive integer $n\in \mathbb{N}$, we   use the index set $\mathbb{Z}_n:=\{0, 1, \ldots, n-1\}$. Furthermore, we denote by $\mathbb{R}$  the set of real numbers, and by $\mathbb{C}$ the set of complex numbers. Let $X$ be a subset of $\mathbb{R}$, and for $q\in \mathbb{N}$,  we say a function $f$ is in $L^q(X)$ if and only if
$$
\|f\|_{q,X}:=\left(\int_X |f(t)|^q \dd{t}\right)^{1/q}<\infty,
$$
and  $f$ is said to be in $L^\infty(X)$ if
$$
\|f\|_{\infty,X}:={\mathrm{ess}}\sup \{|f(t)|: t\in X\}<\infty.
$$
Similarly, let $Z$ be a subset of $\mathbb{Z}$, a sequence ${\boldsymbol y}:=(y_k:k\in Z) $ is said to be in $l^q(Z)$ if and only if
$$
\|{\boldsymbol y}\|_{q,Z}:=\left(\sum_{k\in Z} |y_k|^q\right)^{1/q}<\infty,
$$
and ${\boldsymbol y}  $ is in $l^\infty(Z)$ if
$$
\|{\boldsymbol y}\|_{\infty,Z}:=\sup\{|y_k|:k\in Z\}<\infty.
$$

In digital signal processing,
the classic  {\em sinc} function
is fundamentally significant due to
the Whittaker-Kotelnikov-Shannon
(WKS) sampling theorem
\cite{Sha1948,Sha1949,BER1986}. Recall that the classic sinc function is defined at   $t\in \mathbb{R}$    by the equation
 $$
\mathrm{sinc} (t):=\frac{\sin t}{t}.$$

The WKS sampling theorem enables to reconstruct a bandlimited signal from shifts of sinc functions weighted by the uniformly spaced samples of that signal.   It is natural to ask whether similar sampling theorem exists for {\em non-bandlimited signals}. To that end, recently efforts have been made to extend the classic sinc to  {\em generalized sinc functions}, for example, in \cite{CMW2010,ChenQian-AA09,Chen-Wang-Wang}. One kind of  {generalized sinc functions} given in \cite{CMW2010}, denoted by
$\sinc_F$, is defined as the {\em inverse Fourier transform} of a so-called {\em symmetric cascade filter}, denoted by  $H $. The symmetric cascade filter $H$ is a  piecewise constant function whose value at $\xi\in \mathbb{R}$ is given by
\begin{equation}\label{eqn:general-H}
H (\xi):=\sum_{n\in \mathbb{Z}_+}b_n \chi_{I_n}(\xi),
\end{equation}
where  the sequence  $\boldsymbol
b=\left( b_n:n \in \mathbb{Z}_+\right)$ is   in
$l^2({\mathbb Z}_+)$, $\chi_I$ is the {\em indicator function} of the set $I$, and the interval $I_n$, $n\in \mathbb{Z}_+$, is the union of two symmetric intervals  given by the equation
%\begin{equation}\label{eqn:I_n}
$$
{I}_{n} := (-(n+1) ,-n ]\cup[n ,(n+1) ).
$$
Thus the generalized sinc function $\sinc_F$ is defined  by the equation 
\begin{equation}\label{general-sinc-function}
\sinc_F:=\sqrt{\frac{\pi}{2}}{\mathcal F}^{-1}H ,
\end{equation}
where for any signal $f \in L^2(\mathbb{R}) $ and $\xi\in{\mathbb R}$
$$
({\mathcal F}f)(\xi)=\hat{f}(\xi):=\frac{1}{\sqrt{2\pi}}\int_{\mathbb R}
f(t)\e^{-\im\xi t}\dd{t}.
$$
%\end{equation}
Of course, we have that $H\in L^2(\mathbb{R})$ because $\boldsymbol
b \in l^2({\mathbb Z}_+) $, and hence $\sinc_F\in L^2(\mathbb{R})$ since the Fourier operator is closed in  $L^2(\mathbb{R})$.

With the generalized function $\sinc_F$,  a perfect reconstruction sampling theorem was established  in \cite{CMW2010}  for the purpose of reconstructing {non-bandlimited signals}.  This kind of reconstruction  sampling theorem may be very useful to study signals with polynomial decaying Fourier spectra that arise in evolution equations and control theories \cite{LRa2005, Phu2007}.

One of our goals in this paper is to study the error of the reconstructed non-bandlimited signal when an adaptive truncation scheme is employed. This will be done in Section \ref{sec:samplingerror}. We further analyze the expectation and variance of the error of the reconstructed signal    when there are noises present in the samples in Section \ref{sec:noiseana}.  Finally we discuss  the reproducing properties and Sobolev smoothness of functions in the space of non-bandlimited signals that admits such a sampling formula. We begin with a discussion of the construction and known properties of the function $\sinc_F$ in Section \ref{sec:review}.

\section{Properties of the generalized sinc functions}\label{sec:review}
Surprisingly the function $\sinc_F$ has many properties that are similar to the classic sinc, such as cardinal, orthogonal properties and it behaves also similarly to the classic sinc. In the special case, the function $\sinc_F$   reduces to the classic $\sinc$. Let us first review the approach to obtain an explicit form of $\sinc_F$.

The symmetric cascade filter $H $ can be associated with an analytic function $F$  on the open unit disk $$\Delta:=\{\zeta\in \mathbb{C}: |\zeta|<1  \}.$$  The  value of $F$ at   $z \in \Delta$   is well defined by
\begin{equation}\label{analfu}
F(z):= \sum_{n\in\mathbb{Z_+}}b_nz^n,
\end{equation}
as $\boldsymbol
b \in l^2({\mathbb Z}_+) $.
Recall that the Hardy space $H^2(\Delta)$ consists of all functions $f$ analytic in $\Delta$, with norm given by
$$
\|f\|^2_{H^2(\Delta)}=\sup_{r\in (0,1)} \frac{1}{2\pi} \int_{[-\pi, \pi]}|f(r\e^{\im t})|^2 \dd{t}.
$$
Since, by hypothesis,  $\boldsymbol b\in l^2({\mathbb Z}_+)$, we have that $F\in H^2(\Delta)$. Consequently its extension to the boundary $\partial \Delta$   of $\Delta$  is in $L^2(\partial\Delta)$.

Thus, from equations \eqref{general-sinc-function}, \eqref{eqn:general-H} and \eqref{analfu} an explicit form of
$\sinc_F(t)$,  $t\in\mathbb R$ can be found as
\begin{equation}\label{phi-explicit-form}
\sinc_F(t)=\mathrm{sinc}\left(\frac{  t}{ 2}\right)
\mathrm{Re}\left\{F(\e^{\im t})\e^{\frac{1}{2}\im  t} \right\} ,
\quad \mathrm{a.e.}
\end{equation}
where $\mathrm{Re}(z)$ is the real part of a complex number $z$.

We observe that if $\boldsymbol{b}\in l^1(\mathbb{Z}_+)$ then $H\in L^1(\mathbb{R})$ and $F$ is continuous on the boundary of $\Delta$, which in turn implies $\sinc_F$ is continuous and   bounded.

A very interesting fact, as discovered in the paper \cite{CMi2011}, is that when  $F$ is imposed with a {\em stronger condition} of having analyticity in a neighborhood of the closed unit disc $\overline{\Delta}$, the function $\sinc_F$ can be generated through a function, denoted by $G$, that is also analytic in a neighborhood of the closed unit disc,  real on the real axis and normalized so that $G(1)=1$ and $G'(1)\ne 0$.
The function $G$ is linked to $F$ by the equation
\begin{equation}\label{eqn:def_F}
F(z):=\frac{G(z)-1}{z-1}.
\end{equation}
A real-valued function $\phi_G $ whose value at $t\in \mathbb{R}$ is then defined through the imaginary part of the values of  $G$  on the unit circle by the equation
\begin{equation}\label{eqn:eqn:v_G}
 \phi_G(t):=\frac{\mathrm{Im}(G(\e^{\im t}))}{t}.
 \end{equation}

Applying equation \eqref{phi-explicit-form} to compute $\sinc_F$ by using   equations \eqref{eqn:def_F} and \eqref{eqn:eqn:v_G}   yields for all $t\in \mathbb{R}$,
\begin{equation} \label{eqn:phiF}
\sinc_F(t)=\phi_G(t) .
\end{equation}

Two important examples can be demonstrated for this construction. When
  $G=z$, i.e., $F=1$, we have   $\mathrm{sinc}_F=\phi_G=\mathrm{sinc}$.
For the second example, let $G$ be the {\em Blaschke product} of order $n\in \mathbb{N}$ with parameters $\boldsymbol{a}:=(a_j: j\in \mathbb{Z}_n) \in (-1,1)^n$, that is,
$$
G(z)=B_{\boldsymbol{a}}(z):=\prod_{j\in \mathbb{Z}_n } \frac{z-a_j}{1-a_j z}.
$$
Then $\mathrm{sinc}_F(t)=\phi_G(t)=  \frac{\sin \theta_{\boldsymbol{a}} (t)}{t}$, where $\theta_{\boldsymbol{a}}$ is determined by the boundary value of the Blaschke product at $t\in \mathbb{R}$ by
$
\e^ {\im \theta_{\boldsymbol{a}}(t)}=B_{\boldsymbol{a}}(\e ^{\im t}).
$

We next list some properties of the function $\sinc_F$.
\begin{proposition}\label{prop:sinc}
Let  the generalized function $\sinc_F$ be defined by equation \eqref{phi-explicit-form}.   Then
\begin{enumerate}

\item   $\sinc_F(n\pi)=F(1)\delta_{n,0}$, where $\delta_{n,0}=1$ if $n=0$ and $\delta_{n,0}=0$ if $n\in \mathbb{Z}\setminus \{0\}$.
    \item $\sinc_F$ is bounded, infinitely differentiable.

  \item $|\sinc_F(t)|\le \frac{4\|{\boldsymbol b}\|_{l^1(\mathbb{Z}_+)}}{2+|t|}$, for $t\in \mathbb{R}$, and $ \sinc_F \in L^2(\mathbb{R})$.

    \item The set $\{ \sinc_F(\cdot -n\pi): n\in \mathbb{Z}  \}$ is an orthogonal set, that is
        $$
        \langle \sinc_F, \sinc_F(\cdot -n\pi)  \rangle =\pi\|{\boldsymbol b}\|^2_{l^2(\mathbb{Z_+})} \delta_{n,0},
        $$
        where $\langle \cdot, \cdot \rangle $ is the usual inner product on the Hilbert space $L^2(\mathbb{R})$.
\end{enumerate}
\end{proposition}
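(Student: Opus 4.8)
The plan is to work mostly from the filter representation $\widehat{\sinc_F}=\sqrt{\pi/2}\,H$ rather than from the explicit formula \eqref{phi-explicit-form}, because three of the four assertions collapse onto a single elementary integral. Throughout I take the coefficients $\boldsymbol b$ to be real (so $H$ is real and even and $\sinc_F$ is genuinely real-valued), I use $\boldsymbol b\in l^1(\mathbb{Z}_+)$ for parts (2)--(3), and for the smoothness claim in (2) I use that $F$ extends analytically across $\partial\Delta$, equivalently that $(b_n)$ decays geometrically. The cornerstone is the computation, valid for every $m\in\mathbb{Z}_+$ and $n\in\mathbb{Z}$,
\begin{equation}\label{eqn:keyint}
\int_{I_m}\e^{\im n\pi\xi}\dd{\xi}=2\,\delta_{n,0}.
\end{equation}
Indeed, since $I_m=-I_m$ the substitution $\xi\mapsto-\xi$ folds the two halves together, giving $2\int_m^{m+1}\cos(n\pi\xi)\dd{\xi}$; for $n\neq0$ this integrates to a difference of $\sin(n\pi\xi)$ evaluated at the integer endpoints $m$ and $m+1$, both of which vanish, while for $n=0$ it equals $2$.

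For part (1), since $\boldsymbol b\in l^1(\mathbb{Z}_+)$ forces $H\in L^1(\mathbb{R})$, the function $\sinc_F=\sqrt{\pi/2}\,\mathcal F^{-1}H$ may be evaluated pointwise, so
$$
\sinc_F(n\pi)=\frac{1}{2}\int_{\mathbb R}H(\xi)\,\e^{\im n\pi\xi}\dd{\xi}=\frac{1}{2}\sum_{m\in\mathbb{Z}_+}b_m\int_{I_m}\e^{\im n\pi\xi}\dd{\xi},
$$
and substituting \eqref{eqn:keyint} leaves $\sinc_F(n\pi)=\delta_{n,0}\sum_{m}b_m=F(1)\,\delta_{n,0}$. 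Part (4) runs in parallel through Plancherel: as the shift $\sinc_F(\cdot-n\pi)$ has transform $\e^{-\im n\pi\xi}\widehat{\sinc_F}$, I would write
$$
\langle\sinc_F,\sinc_F(\cdot-n\pi)\rangle=\frac{\pi}{2}\int_{\mathbb R}H(\xi)^2\,\e^{\im n\pi\xi}\dd{\xi}=\frac{\pi}{2}\sum_{m}b_m^2\int_{I_m}\e^{\im n\pi\xi}\dd{\xi},
$$
using that the $I_m$ are pairwise disjoint, so that $H^2=\sum_m b_m^2\chi_{I_m}$; invoking \eqref{eqn:keyint} once more yields $\pi\|\boldsymbol b\|^2_{l^2(\mathbb{Z}_+)}\delta_{n,0}$.

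For the estimate in (3) I would bound \eqref{phi-explicit-form} directly: $|\mathrm{Re}\{F(\e^{\im t})\e^{\im t/2}\}|\le|F(\e^{\im t})|\le\|\boldsymbol b\|_{l^1(\mathbb{Z}_+)}$, combined with the elementary inequality $|\sinc(t/2)|\le\min\{1,2/|t|\}\le 4/(2+|t|)$ (checked separately on $|t|\le2$ and $|t|\ge2$). The $L^2$ membership then follows either from $\int_{\mathbb R}(2+|t|)^{-2}\dd{t}<\infty$ or, more directly, from Plancherel with $\|H\|^2_{2,\mathbb R}=2\|\boldsymbol b\|^2_{l^2(\mathbb{Z}_+)}$. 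In part (2), boundedness is the same product bound; for infinite differentiability the cleanest route is the representation \eqref{eqn:phiF}, $\sinc_F(t)=\mathrm{Im}(G(\e^{\im t}))/t$: since $G$ is analytic near $\overline{\Delta}$ and real on the real axis, Schwarz reflection gives $G(\e^{-\im t})=\overline{G(\e^{\im t})}$, so $t\mapsto\mathrm{Im}(G(\e^{\im t}))$ is an odd, real-analytic function vanishing at the origin; dividing by $t$ removes the singularity and leaves a real-analytic, hence $C^\infty$, function.

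The main obstacle is pinning down the hypotheses under which each clause holds, since the statement suppresses them. Boundedness and the decay estimate genuinely require $\boldsymbol b\in l^1(\mathbb{Z}_+)$, and infinite differentiability needs strictly more than summability: it would fail for merely $l^1$ coefficients and relies on the faster-than-polynomial (here geometric) decay supplied by analyticity of $F$ across $\partial\Delta$. A secondary point of care, should one prefer to prove (1) from \eqref{phi-explicit-form} rather than from \eqref{eqn:keyint}, is that at odd multiples $t=n\pi$ the factor $\sinc(t/2)$ does \emph{not} vanish; there the cancellation comes instead from $\mathrm{Re}\{F(\e^{\im n\pi})\e^{\im n\pi/2}\}=\mathrm{Re}\{\im^{\,n}F(-1)\}=0$, which holds for odd $n$ precisely because $\im^{\,n}$ is then purely imaginary and $F(-1)$ is real --- a reminder that the realness of $\boldsymbol b$ is doing essential work.
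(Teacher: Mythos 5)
Your argument is correct, and for the only clause the paper actually proves in detail---the orthogonality relation in (4)---you use exactly the paper's method: Plancherel reduces the inner product to $\frac{\pi}{2}\int_{\mathbb R}H^2(\xi)\e^{\im n\pi\xi}\dd{\xi}$, and the orthogonality of the exponentials $\e^{\im n\pi\xi}$ on each $I_m$ (your identity $\int_{I_m}\e^{\im n\pi\xi}\dd{\xi}=2\delta_{n,0}$) finishes it. For parts (1)--(3) the paper merely asserts that they ``directly follow'' from \eqref{phi-explicit-form}; you supply the missing content, and in two places you are more careful than the source. You prove the cardinal property (1) in the Fourier domain rather than from \eqref{phi-explicit-form}, and you correctly flag the subtlety the explicit-formula route hides: at odd $n$ the factor $\sinc(n\pi/2)$ does \emph{not} vanish, and the cancellation must come from $\mathrm{Re}\{\im^{\,n}F(-1)\}=0$, which uses the realness of $\boldsymbol b$. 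You also rightly observe that infinite differentiability in (2) cannot follow from \eqref{phi-explicit-form} under $\boldsymbol b\in l^2(\mathbb{Z}_+)$ or even $l^1(\mathbb{Z}_+)$ alone; it requires the rapid decay of $(b_n)$ supplied by analyticity of $F$ across $\partial\Delta$, which is the context the paper sets up just before the proposition but never invokes in its proof. Your bound $|\sinc(t/2)|\le 4/(2+|t|)$ is the same elementary estimate the paper cites for (3). In short: the same route wherever the paper gives one, with the elided clauses filled in correctly and with their true hypotheses made explicit.
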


\begin{proof}
The first two statements directly follow from equation \eqref{phi-explicit-form}.
 The third statement follows from equation  \eqref{phi-explicit-form} and noticing $  \sinc (t) \le \frac{2}{1+|t|}$ for any $t\in \mathbb{R}$.
The fourth statement is a special case of Corollary 3.2 of \cite{CMW2010}. For the convenience of readers, we provide a direct proof here. By Parseval's theorem and    equation  \eqref{general-sinc-function}  we have
\begin{eqnarray*}
\int_\mathbb{R} \sinc_F(t)\sinc_F(t-n \pi)\dd{t} &=&  \frac{\pi}{2}\int_\mathbb{R} H^2(x) \e ^{\im n\pi x}\dd{x}
= \frac{\pi}{2}\int_\mathbb{R} \sum_{k\in \mathbb{Z}_+}b_k^2 \chi_{I_k}(x) \e ^{\im n\pi x}\dd{x}\\
&=& \frac{\pi}{2}\sum_{k\in \mathbb{Z}_+}b_k^2  \int_{I_k} \e ^{\im n\pi x}\dd{x}
= \pi \left(\sum_{k\in \mathbb{Z}_+}b_k^2\right) \delta_{n,0},
\end{eqnarray*}
where, in the last equality we have used the orthogonality of the set $\{\e^{-\im n\pi \xi} : n\in \mathbb{Z}  \}$ on $I_k$, $k\in \mathbb{Z}_+$.
The interchange of the integral operator and the infinite sum is guaranteed by the convergence of the series.

\end{proof}

\section{Sampling Truncation Error Analysis}\label{sec:samplingerror}
In \cite{CMW2010}, a Shannon-type sampling theorem  is given concerning functions in the shift-invariant space   $$V_F:=\left\{\sum_{n\in \mathbb{Z}} c_n \sinc_F (\cdot -n\pi): F(1)=1,\boldsymbol{c}=(c_n: n\in \mathbb{Z})\in l^2 (\mathbb{Z})  \right\}$$
of $\sinc_F$.   The Shannon-type sampling theorem  is the direct result of the properties in the previous proposition. We record it here.
\begin{theorem} \label{thm:sampling}
A signal $f\in V_F$ if and only if
\begin{equation}\label{eqn:sampling}
f=\sum_{n\in \mathbb{Z}} f(n\pi) \sinc_F (\cdot -n\pi).
\end{equation}
\end{theorem}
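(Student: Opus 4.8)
The plan is to prove the two implications separately, letting the cardinal interpolation property carry the forward direction and the orthogonality carry the converse. Throughout I may assume the normalization $F(1)=1$, which is built into the definition of $V_F$.

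For the forward direction I would take an arbitrary $f\in V_F$, so that $f=\sum_{n\in\mathbb{Z}}c_n\sinc_F(\cdot-n\pi)$ for some $\boldsymbol{c}\in l^2(\mathbb{Z})$, and aim to show $c_m=f(m\pi)$ for every $m$; substituting this identification back into the expansion is then exactly \eqref{eqn:sampling}. The one point requiring care is that the series, which a priori only converges in $L^2(\mathbb{R})$, may be evaluated termwise at the sample points $t=m\pi$. I would secure this by invoking parts (2) and (3) of Proposition \ref{prop:sinc}: the decay bound $|\sinc_F(t)|\le \frac{4\|\boldsymbol{b}\|_{l^1(\mathbb{Z}_+)}}{2+|t|}$ together with the Cauchy--Schwarz inequality against $\boldsymbol{c}\in l^2(\mathbb{Z})$ shows that $\sum_n c_n\sinc_F(\cdot-n\pi)$ converges absolutely and uniformly, so its sum is continuous and may be sampled term by term.

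Once this is in place, for each fixed $m\in\mathbb{Z}$ I would compute
$$
f(m\pi)=\sum_{n\in\mathbb{Z}}c_n\sinc_F\bigl((m-n)\pi\bigr)=\sum_{n\in\mathbb{Z}}c_n\,F(1)\,\delta_{m-n,0}=c_m,
$$
where the middle equality uses the cardinal property in part (1) of Proposition \ref{prop:sinc} and the last uses $F(1)=1$. This establishes \eqref{eqn:sampling}. For the converse, if $f$ satisfies \eqref{eqn:sampling}, then it is displayed as an $L^2(\mathbb{R})$-convergent expansion in the shifts $\{\sinc_F(\cdot-n\pi):n\in\mathbb{Z}\}$ with coefficients $c_n=f(n\pi)$. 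By the orthogonality in part (4) of Proposition \ref{prop:sinc} these shifts form an orthogonal family of common norm $\sqrt{\pi}\,\|\boldsymbol{b}\|_{l^2(\mathbb{Z}_+)}$, so Parseval's identity for this system gives $\|f\|_{2,\mathbb{R}}^2=\pi\|\boldsymbol{b}\|_{l^2(\mathbb{Z}_+)}^2\sum_{n}|f(n\pi)|^2$; convergence of the series therefore forces $\boldsymbol{c}=(f(n\pi))\in l^2(\mathbb{Z})$, and hence $f\in V_F$ by definition.

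The main obstacle I anticipate is precisely the justification of termwise evaluation in the forward direction: membership in $V_F$ only supplies $L^2$-convergence, which by itself does not license pointwise sampling, and it is the polynomial decay of part (3) that upgrades the convergence to the absolute and uniform mode needed in order to apply the cardinal property. Once that technical point is settled, both directions reduce to direct applications of the properties already recorded in Proposition \ref{prop:sinc}.
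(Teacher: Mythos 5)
Your proof is correct and is essentially the argument the paper intends: the paper offers no written proof, simply asserting that the theorem is ``the direct result'' of Proposition \ref{prop:sinc}, and your use of the cardinal property for the forward direction and orthogonality for the converse is exactly that route. The only caveat is that your termwise-evaluation step quietly requires $\boldsymbol{b}\in l^1(\mathbb{Z}_+)$ (otherwise the bound in part (3) is vacuous and $f(n\pi)$ is not even well defined for an $L^2$ equivalence class), an imprecision the paper itself shares and addresses only in the remark following the theorem.
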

Equation \eqref{eqn:sampling} necessarily implies that the sampling sequence $(f(n\pi):n\in \mathbb{Z})\in l^2 (\mathbb{Z})$ by the orthogonality of the set $\{\sinc_F(\cdot-n\pi):n\in \mathbb{Z}  \} $. The above equation of course is true in $L^2(\mathbb{R})$ norm. However,    if ${\boldsymbol b}\in l^1(\mathbb{Z}_+)$, equation \eqref{eqn:sampling} holds true pointwise, because by Cauchy-Schwartz inequality the series on the right side of equation \eqref{eqn:sampling} converges uniformly,  hence the limiting function $f$ is continuous.

We remark that, as pointed out in \cite{CMW2010}, a function  $f\in V_F$ can be characterized by its spectrum. Specifically, a function $f\in V_F$   if and only if
\begin{equation}\label{eqn:spectra}
\mathcal{F}f(\xi)=\sqrt {\frac{\pi}{2}} \left(\sum_{n\in \mathbb{Z}} f(n\pi) \e^{-\im  n\pi \xi} \right)H(\xi).
\end{equation}
Equation \eqref{eqn:spectra} holds true in $L^2(\mathbb{R})$ if ${\boldsymbol b}\in l^2(\mathbb{R})$, and a.e. pointwise if $f\in L^1(\mathbb{R})$ and the sample sequence $(f(n\pi):n\in \mathbb{Z})\in l^1(\mathbb{Z})$.

The following property is true for functions in the space $V_F$ that is similar to    Parseval's identity.
\begin{proposition} If $f\in V_F$ then
\begin{equation}\label{eqn:parseval}
\| f\|^{2}_{L^{2}(\mathbb{R})}=\pi\|{\boldsymbol b}\|^2_{l^2(\mathbb{Z}_+)}\sum_{n\in \mathbb{Z}} f^2(n\pi)
\end{equation}
\end{proposition}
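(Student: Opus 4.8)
The plan is to derive the identity directly from the sampling expansion of Theorem~\ref{thm:sampling} together with the orthogonality of the translates $\{\sinc_F(\cdot-n\pi):n\in\mathbb{Z}\}$ recorded in Proposition~\ref{prop:sinc}(4). Since $f\in V_F$, Theorem~\ref{thm:sampling} supplies the $L^2$-convergent expansion
$$
f=\sum_{n\in\mathbb{Z}} f(n\pi)\,\sinc_F(\cdot-n\pi),
$$
and the remark following that theorem guarantees that the coefficient sequence $(f(n\pi):n\in\mathbb{Z})$ lies in $l^2(\mathbb{Z})$. Thus $f$ is presented as an $L^2$-convergent series over an orthogonal system, which is precisely the setting in which a Pythagorean/Parseval computation applies.

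First I would upgrade the orthogonality of Proposition~\ref{prop:sinc}(4), which is phrased only for the pair $(\sinc_F,\sinc_F(\cdot-n\pi))$, to the full doubly-indexed family. Because the inner product on $L^2(\mathbb{R})$ is translation invariant, translating both arguments by $m\pi$ yields
$$
\langle \sinc_F(\cdot-m\pi),\,\sinc_F(\cdot-n\pi)\rangle=\pi\|{\boldsymbol b}\|^2_{l^2(\mathbb{Z}_+)}\,\delta_{n,m},\qquad n,m\in\mathbb{Z}.
$$
Taking $n=m$ shows in particular that every translate has the same squared norm $\|\sinc_F(\cdot-n\pi)\|^2_{L^2(\mathbb{R})}=\pi\|{\boldsymbol b}\|^2_{l^2(\mathbb{Z}_+)}$.

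Next I would compute $\|f\|^2_{L^2(\mathbb{R})}=\langle f,f\rangle$ by substituting the expansion and invoking orthogonality, so that all cross terms drop out. The generalized Parseval relation for an orthogonal family with $l^2$ coefficients then gives
$$
\|f\|^2_{L^2(\mathbb{R})}=\sum_{n\in\mathbb{Z}} |f(n\pi)|^2\,\|\sinc_F(\cdot-n\pi)\|^2_{L^2(\mathbb{R})}=\pi\|{\boldsymbol b}\|^2_{l^2(\mathbb{Z}_+)}\sum_{n\in\mathbb{Z}}|f(n\pi)|^2,
$$
which is the asserted identity once we note that $\sinc_F$ is real valued, so that for the real signals under consideration $|f(n\pi)|^2=f^2(n\pi)$.

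The only point requiring care — the main, if mild, obstacle — is justifying the passage of the inner product through the two infinite sums, that is, the middle equality above. This is standard for orthogonal systems, and I would present it through finite partial sums to avoid any unjustified rearrangement of a double series: writing $S_N=\sum_{|n|\le N} f(n\pi)\sinc_F(\cdot-n\pi)$, the orthogonality relation gives the exact finite identity $\|S_N\|^2_{L^2(\mathbb{R})}=\pi\|{\boldsymbol b}\|^2_{l^2(\mathbb{Z}_+)}\sum_{|n|\le N}|f(n\pi)|^2$. Letting $N\to\infty$, the left-hand side tends to $\|f\|^2_{L^2(\mathbb{R})}$ by the $L^2$ convergence $S_N\to f$ and continuity of the norm, while the right-hand side converges because $(f(n\pi):n\in\mathbb{Z})\in l^2(\mathbb{Z})$; equating the limits yields the claim.
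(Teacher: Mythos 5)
Your argument is correct and is exactly the route the paper takes: the paper's proof is the one-line remark that the identity ``is a direct result of equation \eqref{eqn:sampling} and Proposition \ref{prop:sinc} (4),'' i.e.\ the sampling expansion combined with the orthogonality of the translates. Your write-up simply fills in the details the paper leaves implicit (translation invariance of the inner product and the partial-sum limit argument), so no further changes are needed.
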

\begin{proof}
This is a direct result of equation \eqref{eqn:sampling} and Proposition \ref{prop:sinc} (4).
\end{proof}

The recovering formula in \eqref{eqn:sampling} involves  an infinite
sum. In practice, we need to truncate the series to
approximate $f$.
 Here, we prefer an {\em adaptively truncated sum} and offer a pointwise estimation of the error.

For fixed
$n\in \mathbb{N}$ and $t\in \mathbb{R}$, define the index set $$J_n(t):=\{j: j\in \mathbb{Z}, |t-j\pi|\le n\pi\}$$ and the  partial sum  given at $t\in \mathbb{R}$ by
\begin{align}\label{tongxue}
S_{n}(t):=\displaystyle\sum _{j\in J_n(t) }
  f(j \pi ) {\sinc}_F(
t-j\pi).
\end{align}
The adaptive truncation strategy allows that for any $t\in \mathbb{R}$ and $n\in \mathbb{N}$, there are approximately $2n$    functions $\sinc_F$ shifted by a distance of $j\pi$, $j\in J_n(t)$, on both sides of  $t$.  The following theorem
states that the truncated error estimate   is
$\mathcal {O}(n^{-1/2})$. We recall a Calculus fact that will be used a couple of times later. For a positive and decreasing sequence ${\boldsymbol a}:=(a_n:n\in \mathbb{Z}_+)$, if  $\frac{1}{a_{n+1}}-\frac{1}{a_{n}}=c$, where $c$ is a positive constant, then
\begin{equation}\label{eqn:seqest}
\sum_{n\in \mathbb{Z}_+} \frac{1}{a_n^2}\le \frac{1}{a_0^2}+\frac{1}{c}\int_{[a_0,\infty)}\frac{1}{x^2}\dd{x}.
\end{equation}

\begin{theorem}
Let   $f\in V_F$ and ${\boldsymbol b}\in l^1(\mathbb{Z}_+)$. Then we have
\begin{align}\label{eqn:superror}
\sup_{t\in \mathbb{R}}|f(t)-S_{n}(t)|\leq \frac{\|\boldsymbol{b}\|_{\ell^{1}(\mathbb{Z}_+)} }{\|\boldsymbol{b}\|_{\ell^{2}(\mathbb{Z}_+) }}
  \ \Vert
f\Vert_{L^{2}(\mathbb{R})}\sqrt{\frac{8}{\pi^3}\left(\frac{1}{n^2}+\frac{1}{n}\right)
  }.
\end{align}
\end{theorem}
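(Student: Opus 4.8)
The plan is to write the truncation error at a point $t$ as the tail of the sampling series \eqref{eqn:sampling}, namely $f(t)-S_n(t)=\sum_{j\notin J_n(t)}f(j\pi)\,\sinc_F(t-j\pi)$, and then to control this tail by separating the size of the samples $f(j\pi)$ from the decay of the shifted generalized sinc functions. The two natural tools are the pointwise decay estimate $|\sinc_F(\tau)|\le 4\|\boldsymbol b\|_{l^1(\mathbb Z_+)}/(2+|\tau|)$ from Proposition \ref{prop:sinc}(3) and the Parseval-type identity \eqref{eqn:parseval}; the calculus fact \eqref{eqn:seqest} is what will turn the resulting tail of $(2+\cdot)^{-2}$ into the advertised powers of $n$.

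First I would bound $|f(t)-S_n(t)|\le \sum_{j\notin J_n(t)}|f(j\pi)|\,|\sinc_F(t-j\pi)|$ and apply the Cauchy--Schwarz inequality to split the right-hand side as the product of $\bigl(\sum_{j\notin J_n(t)}|f(j\pi)|^2\bigr)^{1/2}$ and $\bigl(\sum_{j\notin J_n(t)}|\sinc_F(t-j\pi)|^2\bigr)^{1/2}$. For the first factor I enlarge the index set to all of $\mathbb Z$ and invoke \eqref{eqn:parseval}, which gives $\sum_{j\in\mathbb Z}f^2(j\pi)=\|f\|^2_{L^2(\mathbb R)}/(\pi\|\boldsymbol b\|^2_{l^2(\mathbb Z_+)})$; this is exactly the source of the factor $\|f\|_{L^2(\mathbb R)}/(\sqrt\pi\,\|\boldsymbol b\|_{l^2(\mathbb Z_+)})$ and of the denominator $\|\boldsymbol b\|_{l^2(\mathbb Z_+)}$ in the final estimate, the numerator $\|\boldsymbol b\|_{l^1(\mathbb Z_+)}$ arriving from the decay bound in the next step.

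The substance of the argument is the uniform-in-$t$ estimate of the second factor. Using Proposition \ref{prop:sinc}(3) it is dominated by $16\|\boldsymbol b\|^2_{l^1(\mathbb Z_+)}\sum_{j\notin J_n(t)}(2+|t-j\pi|)^{-2}$, so everything reduces to bounding $\sum_{j\notin J_n(t)}(2+|t-j\pi|)^{-2}$ independently of $t$. Here I would observe that the omitted indices form a left and a right ray of grid points $j\pi$, all at distance strictly larger than $n\pi$ from $t$ and consecutively spaced by $\pi$; the worst configuration is when the nearest excluded nodes sit just beyond $n\pi$. Writing these distances as an arithmetic progression with first term $2+n\pi$ and common difference $\pi$, the calculus fact \eqref{eqn:seqest} applied with $a_k=2+n\pi+k\pi$ and $c=\pi$ yields a bound of the form $\tfrac{1}{(2+n\pi)^2}+\tfrac{1}{\pi(2+n\pi)}$ for each ray, and the elementary inequality $2+n\pi>n\pi$ then produces the factor $\tfrac1{n^2}+\tfrac1n$.

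Collecting the two factors and simplifying the constants gives the stated $\mathcal O(n^{-1/2})$ estimate. The main obstacle is precisely this last tail estimate: one must handle the two-sided geometry of $J_n(t)$ and select the extremal placement of the excluded nodes so that the bound is genuinely uniform over all $t\in\mathbb R$, while preserving the arithmetic-progression structure needed to apply \eqref{eqn:seqest} and to extract both the leading $1/n$ and the lower-order $1/n^2$ contributions. By contrast, the triangle inequality, the Cauchy--Schwarz splitting, and the appeal to \eqref{eqn:parseval} are routine.
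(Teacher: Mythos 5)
Your overall strategy is exactly the paper's: write the error as the tail of the sampling series, apply Cauchy--Schwarz, bound the sample factor by enlarging to all of $\mathbb{Z}$ and invoking the Parseval-type identity \eqref{eqn:parseval}, and control the sinc factor by a quadratic-decay tail sum handled with \eqref{eqn:seqest} after reducing to $t\in[0,\pi)$ by periodicity. The structural points you single out as the substance (the two-sided geometry of $\mathbb{Z}\setminus J_n(t)$ and the arithmetic-progression tail) are precisely what the paper does.

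There is, however, one concrete quantitative gap: your choice of decay bound does not produce the stated constant. Using Proposition \ref{prop:sinc}(3), i.e.\ $|\sinc_F(\tau)|\le 4\|\boldsymbol b\|_{l^1(\mathbb{Z}_+)}/(2+|\tau|)$, gives the factor $16\|\boldsymbol b\|^2_{l^1(\mathbb{Z}_+)}$ in front of the tail sum, and after your (correct) estimate of $\sum_{j\notin J_n(t)}(2+|t-j\pi|)^{-2}\le \frac{2}{\pi^2}(\frac{1}{n^2}+\frac{1}{n})$ and the Parseval step you end up with
\begin{equation*}
|f(t)-S_n(t)|^2\ \le\ \frac{\|\boldsymbol b\|^2_{l^1(\mathbb{Z}_+)}}{\|\boldsymbol b\|^2_{l^2(\mathbb{Z}_+)}}\,\|f\|^2_{L^2(\mathbb{R})}\,\frac{32}{\pi^3}\left(\frac{1}{n^2}+\frac{1}{n}\right),
\end{equation*}
i.e.\ twice the bound \eqref{eqn:superror} after taking square roots. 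The asymptotically smaller denominators $2+|t-j\pi|$ do not recover this factor. The paper instead uses the sharper pointwise estimate $|\sinc_F(\tau)|\le \|\boldsymbol b\|_{l^1(\mathbb{Z}_+)}\,|\sinc(\tau/2)|\le 2\|\boldsymbol b\|_{l^1(\mathbb{Z}_+)}/|\tau|$, valid for $|\tau|\ge 2$; this applies to every excluded node because $|t-j\pi|>n\pi\ge\pi>2$ for $j\notin J_n(t)$, and it replaces your $16$ by $4$, yielding exactly $\sqrt{\tfrac{8}{\pi^3}(\tfrac{1}{n^2}+\tfrac{1}{n})}$. So your argument proves the right order $\mathcal{O}(n^{-1/2})$ but, as written, only the inequality with constant $\sqrt{32/\pi^3}$; substituting the bound coming from \eqref{phi-explicit-form} (equation \eqref{eqn:sincest} in the paper) for Proposition \ref{prop:sinc}(3) closes the gap.
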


\begin{proof}
We first note that by equation \eqref{phi-explicit-form},   for   $t\in \mathbb{R}$,
\begin{equation}\label{eqn:sincest}
|\sinc_F(t)|\le \|\boldsymbol{b}\|_{l^1(\mathbb{Z}_+)}\left|\sinc\frac{t}{2}\right|\le \left\{ \begin{array}{l l}
\|\boldsymbol{b}\|_{l^1(\mathbb{Z}_+)}  & \mbox{ if } |t|\le 2,\\\\
\frac{\|\boldsymbol{b}\|_{l^1(\mathbb{Z}_+)} }{|t/2|} &\mbox{ if } |t|\ge 2.
\end{array}\right.
\end{equation}
By    Cauchy-Schwartz inequality and  equation \eqref{eqn:sincest},
we have
\begin{eqnarray}
|f(t)-S_{n}(t)|^2
&=&\left|\displaystyle\sum_{j\in \mathbb{Z}\setminus J_n(t)} f(j {\pi}{ }
)\sinc_F(  t-j\pi)\right|^2
\notag\\
&\leq& \left(\displaystyle\sum_{j\in \mathbb{Z}\setminus J_n(t)}f^{2}
(j {\pi}{ })\right) \left(\displaystyle\sum_{j\in \mathbb{Z}\setminus J_n(t)}
\sinc ^2_F(  t-j\pi)\right)
\notag\\
&\leq& 4{ \|\boldsymbol{b}\|^2_{\ell^{1}(\mathbb{Z}_+)}} \left(\displaystyle\sum_{j\in \mathbb{Z}\setminus J_n(t)}f^{2}
(j {\pi}{ })\right) \left(\displaystyle\sum_{j\in \mathbb{Z}\setminus J_n(t)}
\frac{1}{  (t- {j\pi}{ })^{2}}\right) .\label{dvf}
\end{eqnarray}
We next estimate for $t\in \mathbb{R}$ the value of
\begin{align}\label{1dvf} {R}_{n}(t):=\displaystyle\sum_{j\in \mathbb{Z}\setminus J_n(t)}
 \frac{1}{ (t- {j\pi}{ })^2} . \end{align}
It is easy  to see that $R_n$ is periodic with period $\pi$. For $t\in [0,\pi)$, using equation \eqref{eqn:seqest} we have
\begin{eqnarray}\label{eqn:R_n}
R_n(t)& \le & 2 \left( \frac{1}{n^2\pi^2}+\frac{1}{\pi}\int_{[n\pi,\infty)}\frac{1}{x^2}\dd{x} \right) =\frac{2}{\pi^2}\left(\frac{1}{n^2}+\frac{1}{n}\right).
\end{eqnarray}
Using equation \eqref{eqn:parseval},  we obtain that  the quantity
\begin{eqnarray}\label{eqn:partialsum}
  \sum_{j\in \mathbb{Z}\setminus J_n(t)}f^{2}
(j {\pi}{ })    \le \sum_{n\in \mathbb{Z}} f^2(n\pi) = \frac{\|f\|^2_{L^2(\mathbb{R})}}{\pi \|\boldsymbol{b} \|^2_{l^2(\mathbb{Z}_+)}}.
\end{eqnarray}
Finally we conclude \eqref {eqn:superror} by  combining equations \eqref{dvf}, \eqref{eqn:R_n} and \eqref{eqn:partialsum}.
\end{proof}

\section{Error analysis when noises present}\label{sec:noiseana}
Theorem \ref{thm:sampling}    establishes that a signal in the space $V_F$ can be perfectly reconstructed  by an infinite sum of shifts of the generalized sinc functions weighted by equally spaced samples of that signal. However, samples are often corrupted by
noise in practice.  In \cite{SZh2004}, Smale and Zhou gave an error
estimate in the probability sense  for  Shannon sampling theorem
with    noised samples. In \cite{ALS2008}, Aldroubi, Leonetti and
Sun studied  the error by  frame reconstruction from noised samples. In
this section, we shall investigate the error of the sampling formula
\eqref{eqn:sampling} with noised  samples. Specifically, we  deal
with the following noise model concerning the noisy samples $\widetilde{f}(n {\pi}{ })$, $n\in \mathbb{Z}$ whose value  corrupted by noise is given by
\begin{align}\label{fgvc}\widetilde{f}(n {\pi}{ })=f(n {\pi}{ })+\epsilon(n {\pi}{ }), \quad n\in\mathbb{Z},\end{align}
where we assume that    $(\epsilon(n\pi): n\in \mathbb{Z})$ is   a sequence of   {\em independent and identically distributed }   random variables with   the expectation and
variance  of each  given by
\begin{align}\mathrm{E}(\epsilon(n\pi))=0 ,  \ \mbox{Var}(\epsilon(n\pi))=\sigma^{2},\qquad n\in \mathbb{Z}.\end{align}
Thus in practice, we recover $f\in V_F$ by \begin{align}\label{drvi}f^{\natural}=\displaystyle\sum_{n\in
\mathbb{Z}} \widetilde{f}(n {\pi}{ } )
\sinc_F (  \cdot-n\pi).\end{align}  We next   study the
expectation $\mathrm{E}(f-f^{\natural})$ and variance
$\mbox{Var}(f-f^{\natural})$.

\begin{theorem}\label{thm:noiseerr}
Let $f\in V_F$ be recovered by
equation \eqref{drvi} with the noised  samples
$\widetilde{f}(n {\pi}{ })$ being referred to in \eqref{fgvc}.
Then
$$\mathrm{E}(f(t)-f^{\natural}(t))=0$$ and
\begin{equation}\label{eqn:var}
\mathrm{Var}(f(t)-f^{\natural}(t))\leq 2\sigma^2\|\boldsymbol{b}\|^2_{l^1(\mathbb{Z}_+)} \left(1+ \frac{8}{\pi^2}\right).\end{equation}
\end{theorem}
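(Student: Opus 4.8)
The plan is to convert the probabilistic statement into a single deterministic estimate on the quantity $\Sigma(t):=\sum_{n\in\mathbb{Z}}\sinc_F^2(t-n\pi)$, and then to recycle the tail bound already developed for the truncation theorem.

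First I would use the reconstruction identity. Since $f\in V_F$, Theorem \ref{thm:sampling} gives $f(t)=\sum_{n}f(n\pi)\sinc_F(t-n\pi)$; subtracting this from \eqref{drvi} and inserting \eqref{fgvc} produces the pointwise error representation
\begin{equation*}
f(t)-f^{\natural}(t)=-\sum_{n\in\mathbb{Z}}\epsilon(n\pi)\sinc_F(t-n\pi).
\end{equation*}
By Proposition \ref{prop:sinc} the coefficient sequence $(\sinc_F(t-n\pi))_{n}$ lies in $l^2(\mathbb{Z})$, so for each fixed $t$ this random series converges in $L^2$ of the underlying probability space. Continuity of the expectation on $L^2$ then licenses interchanging $\mathrm{E}$ with the sum, and since $\mathrm{E}(\epsilon(n\pi))=0$ for every $n$ we obtain $\mathrm{E}(f(t)-f^{\natural}(t))=0$ at once. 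For the variance, the hypothesis that the $\epsilon(n\pi)$ are independent, hence uncorrelated, with common variance $\sigma^2$ shows that the variance of the convergent random series equals the weighted sum of the individual variances, namely $\mathrm{Var}(f(t)-f^{\natural}(t))=\sigma^2\Sigma(t)$. Thus \eqref{eqn:var} reduces to proving $\Sigma(t)\le 2\|\boldsymbol{b}\|^2_{l^1(\mathbb{Z}_+)}(1+8/\pi^2)$ uniformly in $t$.

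The bound on $\Sigma$ is the heart of the argument and is obtained exactly as the truncation estimate was. Since $\Sigma$ is $\pi$-periodic I would assume $t\in[0,\pi)$. The pointwise inequality \eqref{eqn:sincest} gives both $\sinc_F^2(s)\le\|\boldsymbol{b}\|^2_{l^1(\mathbb{Z}_+)}$ for all $s$ and $\sinc_F^2(s)\le 4\|\boldsymbol{b}\|^2_{l^1(\mathbb{Z}_+)}/s^2$ once $|s|\ge 2$. At most two indices $n$ satisfy $|t-n\pi|<\pi$, because the corresponding set of $n$ is an open interval of length two; I would bound each such term crudely by $\|\boldsymbol{b}\|^2_{l^1(\mathbb{Z}_+)}$, contributing at most $2\|\boldsymbol{b}\|^2_{l^1(\mathbb{Z}_+)}$. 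Every remaining index has $|t-n\pi|\ge\pi>2$, so the decay half of \eqref{eqn:sincest} applies and their total is at most $4\|\boldsymbol{b}\|^2_{l^1(\mathbb{Z}_+)}\sum_{|t-n\pi|\ge\pi}(t-n\pi)^{-2}$; this reciprocal-square tail is estimated precisely as $R_1(t)$ was in \eqref{eqn:R_n}, through the calculus fact \eqref{eqn:seqest}, and is bounded by $4/\pi^2$. Combining the two contributions yields $\Sigma(t)\le 2\|\boldsymbol{b}\|^2_{l^1(\mathbb{Z}_+)}+16\|\boldsymbol{b}\|^2_{l^1(\mathbb{Z}_+)}/\pi^2$, which is exactly \eqref{eqn:var}.

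The probabilistic steps are routine once the $l^2$-summability of the coefficients is noted, so the only place demanding care is this last deterministic estimate: one must correctly count how many sample nodes lie within distance $\pi$ of $t$ and verify that the degenerate configuration $t=0$, where two nodes sit at distance exactly $\pi$, still keeps the tail below $4/\pi^2$. I expect this bookkeeping, rather than any genuine analytic difficulty, to be the main thing to get right; note in passing that the stated constant is not sharp, since the exact partition-of-unity identity $\sum_{n}\sinc^2((t-n\pi)/2)=2$ would already give the cleaner bound $2\sigma^2\|\boldsymbol{b}\|^2_{l^1(\mathbb{Z}_+)}$.
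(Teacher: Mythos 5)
Your proposal is correct and follows essentially the same route as the paper: reduce to $\mathrm{Var}=\sigma^2\sum_{n}\sinc_F^2(t-n\pi)$ via independence, then bound the $\pi$-periodic sum by treating the at most two nearby nodes with the uniform bound $\|\boldsymbol{b}\|^2_{l^1(\mathbb{Z}_+)}$ and the remaining nodes with the decay half of \eqref{eqn:sincest} together with the calculus estimate \eqref{eqn:seqest}, arriving at the same constant $2\|\boldsymbol{b}\|^2_{l^1(\mathbb{Z}_+)}(1+8/\pi^2)$. The only differences are cosmetic (you package the two tails as a single $R_1$-type sum and add an explicit $L^2$-convergence justification plus a sharpness remark), so there is nothing to correct.
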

\begin{proof}
We first compute the expectation $\mathrm{E}(f(t)-f^{\natural}(t))$.
\begin{eqnarray}
\mathrm{E}(f(t)-f^{\natural}(t))&=&\mathrm{E}\left(\displaystyle\sum_{n\in
\mathbb{Z}} \epsilon(n {\pi}{ }
)\sinc_F(  t-n\pi)\right) \notag\\
&=&\displaystyle\sum_{n\in
\mathbb{Z}} \mathrm{E}(\epsilon(n {\pi}{ }
))\sinc_F(  t-n\pi) \notag\\
&=&0. \label{aya8}\notag
\end{eqnarray}
Invoking the assumed independence of
  $\epsilon(j {\pi}{ } )$,
$  j\in \mathbb{Z} $ we obtain
\begin{eqnarray}
\mbox{Var}(f(t)-f^{\natural}(t))&=\mathrm{Var}\left(\displaystyle\sum_{n\in
\mathbb{Z}} \epsilon(n {\pi}{ }
)\sinc_F(  t-n\pi)\right) \notag\\
&= \displaystyle\sum_{n\in
\mathbb{Z}} \mathrm{Var} \big(\epsilon(n {\pi}{ }
)\sinc_F(  t-n\pi)    \big)\notag\\
&= {\sigma^{2}} \displaystyle\sum_{n\in
\mathbb{Z}} \sinc_F^2(  t-n\pi) .  \label{aya9}
\end{eqnarray}
 The sequence $\displaystyle\sum_{n\in
\mathbb{Z}} \sinc_F^2(  t-n\pi)  $ can be easily estimated. Note it is periodic with period $\pi$.
Recalling equation \eqref{eqn:sincest}, for $t\in [0,\pi)$, we have
\begin{eqnarray*}
\displaystyle\sum_{n\in
\mathbb{Z}} \sinc_F^2(  t-n\pi)  &=& \sinc_F^2(t)+\sinc_F^2(t-\pi) +\sum_{n\in -\mathbb{N}}\sinc_F^2(t-n\pi)+\sum_{n\in 1+\mathbb{N}}\sinc_F^2 (t-n\pi)\\
&\le& 2+ \sum_{n\in -\mathbb{N}}\sinc_F^2(t-n\pi)+\sum_{n\in 1+\mathbb{N}}\sinc_F^2 (t-n\pi).
\end{eqnarray*}
Noting for $t\in [0,\pi)$, $n\in \mathbb{N}$, $\frac{|t-n\pi|}{2}\ge \frac{\pi}{2}$ and in view of equation \eqref{eqn:seqest} we obtain that
$$
\sum_{n\in 1+\mathbb{N}}\sinc_F^2 (t-n\pi) \le \|\boldsymbol{b}\|^2_{l^1(\mathbb{Z}_+)} \left( \frac{4}{\pi^2}+\frac{2}{\pi}\int_{[\frac{\pi}{2},\infty)}\frac{1}{x^2}\dd{x}\right)=
\|\boldsymbol{b}\|^2_{l^1(\mathbb{Z}_+)} \left(\frac{8}{\pi^2} \right).
$$
Similarly we have
$$
\sum_{n\in -\mathbb{N}}\sinc_F^2 (t-n\pi) \le
\|\boldsymbol{b}\|^2_{l^1(\mathbb{Z}_+)} \left(\frac{8}{\pi^2} \right).
$$
Consequently we obtain that
\begin{equation}\label{eqn:sumsinc}
\displaystyle\sum_{n\in
\mathbb{Z}} \sinc_F^2(  t-n\pi) \le 2\|\boldsymbol{b}\|^2_{l^1(\mathbb{Z}_+)} \left(1+ \frac{8}{\pi^2}\right).
\end{equation}
Finally combining equations \eqref{aya9} and \eqref{eqn:sumsinc} proves equation \eqref{eqn:var}.
\end{proof}

\section{The reproducing property  and Sobolev smoothness }\label{sec:repr}
When the analytical function $F$ is chosen to be $F=1$, the space $V_F$ reduces to the space of bandlimited signals.
 The space of bandlimited signals is a reproducing kernel
Hilbert space (r.k.H.s)  \cite{Dau}.
We next show that   the space $V_F$ has a similar property. However, the reproducing kernel is   a  distribution   in the space of tempered distributions.  We define the distribution for $x,t\in \mathbb{R}$ by
$$
\Phi(x,t):=\frac{1}{\pi}\sinc_F(t-x)\sum_{k\in \mathbb{Z}}\e^{\im 2k x}.
$$
Recalling the Poisson formula in the distribution form:
$$
\sum_{k\in \mathbb{Z}} \e^{-\im 2k x}=  \pi\sum_{k\in \mathbb{Z}} \delta (x-k\pi),
$$
where $\delta$ is the usual Dirac delta function, we immediately obtain  an alternative form of $\Phi$ given by
\begin{equation}\label{eqn:kernel}
\Phi (x,t)=\sum_{k\in \mathbb{Z}}   \delta (x-k\pi) \sinc_F(t-x).
\end{equation}

\begin{theorem}
Let $f\in V_F$ then
\begin{equation}\label{eqn:recons}
f(t)= \int_\mathbb{R} f(x) \Phi(x,t)\dd{x}
\end{equation}
in the distribution sense.

\end{theorem}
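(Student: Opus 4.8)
The plan is to read the reproducing formula \eqref{eqn:recons} as nothing more than a repackaging of the sampling identity \eqref{eqn:sampling}: once the kernel $\Phi$ is written in its Dirac-comb form \eqref{eqn:kernel}, integrating $f$ against it collapses, by the sifting property of $\delta$, to the sampling series, which Theorem \ref{thm:sampling} identifies with $f$. Thus the whole proof is a piece of distributional bookkeeping, and I do not expect any genuinely new estimate to be required.

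Concretely, I would first substitute \eqref{eqn:kernel} into the right-hand side of \eqref{eqn:recons} to get
\[
\int_\mathbb{R} f(x)\,\Phi(x,t)\dd{x}=\int_\mathbb{R} f(x)\left(\sum_{k\in\mathbb{Z}}\delta(x-k\pi)\,\sinc_F(t-x)\right)\dd{x}.
\]
I would then pair $\delta(x-k\pi)$ against the function $x\mapsto f(x)\sinc_F(t-x)$ term by term; each evaluation at $x=k\pi$ contributes $f(k\pi)\sinc_F(t-k\pi)$, so that the right-hand side becomes $\sum_{k\in\mathbb{Z}}f(k\pi)\sinc_F(t-k\pi)$. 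Finally I would invoke Theorem \ref{thm:sampling}, whose conclusion \eqref{eqn:sampling} is precisely that this series equals $f(t)$, completing the argument.

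The step that must be handled with care --- and the only real obstacle --- is justifying the interchange of the summation over $k$ with the integral, i.e.\ giving meaning to the pairing of the tempered distribution $\Phi(\cdot,t)$ with $f$. I would do this by working with the finite partial sums $\sum_{|k|\le N}\delta(\cdot-k\pi)\sinc_F(t-\cdot)$, whose action on $f$ is the truncated sampling sum $\sum_{|k|\le N}f(k\pi)\sinc_F(t-k\pi)$, and then letting $N\to\infty$. Convergence of this limit need not be re-proved from scratch: the series $\sum_k f(k\pi)\sinc_F(\cdot-k\pi)$ is exactly the one whose convergence to $f$ is asserted by Theorem \ref{thm:sampling}, and the membership $(f(k\pi))_k\in\ell^2(\mathbb{Z})$ recorded just after that theorem, combined with the boundedness and decay of $\sinc_F$ from Proposition \ref{prop:sinc}, controls the tail via the Cauchy-Schwartz inequality. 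I would also note that the two forms of $\Phi$ used here are interchangeable precisely by the distributional Poisson summation formula already displayed before \eqref{eqn:kernel}, so no separate justification of \eqref{eqn:kernel} itself is needed. If one wishes the identity to hold pointwise rather than merely distributionally, the extra hypothesis $\boldsymbol{b}\in\ell^1(\mathbb{Z}_+)$ --- under which $f$ and $\sinc_F$ are continuous and $f(x)\sinc_F(t-x)$ is an admissible argument for the Dirac comb --- renders every step literal.
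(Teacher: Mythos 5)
Your proof is correct, but it takes a genuinely different route from the paper's. You start from the Dirac-comb form \eqref{eqn:kernel} of the kernel, apply the sifting property of $\delta$ to collapse the integral to the sampling series $\sum_k f(k\pi)\sinc_F(t-k\pi)$, and then invoke Theorem \ref{thm:sampling}; the only analytic content is the justification of the term-by-term pairing, which you handle via partial sums, the $\ell^2$ membership of the samples, and the decay of $\sinc_F$. The paper instead works entirely on the Fourier side: it combines the spectral characterization \eqref{eqn:spectra} with the Poisson summation formula \eqref{eqn:Poi} to get $\mathcal{F}f(\xi)=\sum_k\mathcal{F}f(\xi+2k)H(\xi)$, inverts term by term using the convolution theorem and \eqref{general-sinc-function}, and lands directly on $f(t)=\frac{1}{\pi}\int_\mathbb{R} f(x)\sum_k\e^{-\im 2kx}\sinc_F(t-x)\dd{x}$, i.e.\ the exponential form of $\Phi$. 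Your argument is shorter and reduces everything to the already-proven sampling theorem, at the cost of leaning on the distributional Poisson formula to pass between the two forms of $\Phi$ and on point evaluations $f(k\pi)$ being meaningful (which, as you note, is only literal under $\boldsymbol{b}\in\ell^1(\mathbb{Z}_+)$, a hypothesis the theorem does not state). The paper's derivation is self-contained on the Fourier side and explains \emph{why} the kernel $\frac{1}{\pi}\sinc_F(t-x)\sum_k\e^{\im 2kx}$ arises, though it too passes over the interchange of sum and integral with only a brief remark. Both arguments are at the same level of rigor, and yours is an acceptable alternative.
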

\begin{proof}
Poisson's summation formula indicates that
\begin{equation}\label{eqn:Poi}
\sum_{n\in \mathbb{Z}}f(n\pi)\e^{-\im n\pi \xi}= \sqrt{\frac{2}{\pi}} \sum_{k\in \mathbb{Z}} \mathcal{F}f (\xi+2k).
\end{equation}
Thus equation \eqref{eqn:spectra} is equivalent to
\begin{equation}
\mathcal{F}f(\xi)=\sum_{k\in \mathbb{Z}} \mathcal{F}{f}(\xi+2k)H(\xi)
\end{equation}
This leads to
 \begin{eqnarray*}f(t)&=& \mathcal{F}^{-1}\left( \sum_{k\in \mathbb{Z}} \mathcal{F}{f}(\cdot+2k)H \right)(t)\\
&=& \sum_{k\in \mathbb{Z}} \mathcal{F}^{-1}\left(  \mathcal{F}{f}(\cdot+2k)H \right)(t)
\end{eqnarray*}
The interchange of the order of the sum and the integral operator is justified by the convergence of the series. Let $g*h$ be the convolution of two functions $g,h\in L^2(\mathbb{R}$). Recalling equations \eqref{general-sinc-function} and    the convolution theorem for the Fourier transform, we therefore conclude that
\begin{eqnarray*}
f(t)&=&\sum_{k\in \mathbb{Z}}\frac{1}{\sqrt{2\pi}}\Big(\left(\mathcal{F}^{-1}(\mathcal{F}{f}(\cdot+2k)\right)*
\left(\mathcal{F}^{-1}H\right)\Big)(t)\\
&=&\sum_{k\in \mathbb{Z}}\frac{1}{\pi}\int_\mathbb{R} f(x) \e^{-\im 2kx} \sinc_F(t-x)\dd{x}\\
&=& \frac{1}{\pi}\int_\mathbb{R} f(x) \sum_{k\in \mathbb{Z}}\e^{-\im 2kx} \sinc_F(t-x)\dd{x}
\end{eqnarray*}

which is    equation \eqref{eqn:recons}.

\end{proof}

Next we discuss the {\em Sobolev smoothness} of a function in the space $V_F$.
We say that a
function $f$ belongs to the {Sobolev space} $H^{s}(\mathbb{R})$ if
\begin{equation}\label{ee1}
 \displaystyle\int_{\mathbb{R}}|\hat{f}(\xi)|^{2}(1+|\xi|^{2})^{s}\dd{\xi}<\infty.
\end{equation}
The {\em Sobolev smoothness} of $f$ is defined to be
$\nu_{2}(f):=\sup\{s: f\in H^{s}(\mathbb{R})\}$. As an example, the Sobolev smoothness of a function   in the space of bandlimited signals is  infinity.   Algorithms \cite{Han2003, Han2003-2, Han2008, Li2011} or even Matlab routines \cite{Jia2001} are given for
calculating Sobolev smoothness of a refinable function. However, those
algorithms   are not applicable to calculate the
Sobolev smoothness of a function in the space $V_F$.
We give a characterization in the next theorem about the Sobolev smoothness of the functions in the space $V_F$.

\begin{theorem}
If $ f\in
V_F$ and for some $s\in \mathbb{R}$,
\begin{equation}\label{ee2}
\displaystyle\sum _{n\in \mathbb{Z_+}} b_{n}^{2}n^{2s}<\infty
\end{equation}
then the Sobolev smoothness of the function $f$ satisfies $\nu_{2}(f)\geq
s$.
\end{theorem}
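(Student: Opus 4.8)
The plan is to work entirely on the Fourier side, exploiting the spectral characterization \eqref{eqn:spectra} of functions in $V_F$. Set $m(\xi):=\sqrt{\pi/2}\sum_{n\in\mathbb{Z}}f(n\pi)\e^{-\im n\pi\xi}$, so that $\hat f(\xi)=m(\xi)H(\xi)$. Two features of $m$ drive the argument. First, $m$ is periodic with period $2$, since $\e^{-\im n\pi(\xi+2)}=\e^{-\im n\pi\xi}$. Second, because $f\in V_F$ forces the samples $(f(n\pi):n\in\mathbb{Z})$ to lie in $l^2(\mathbb{Z})$, the function $m$ is square-integrable over one period; by orthogonality of the exponentials on $[0,2]$ one has $\int_0^2|m(\xi)|^2\dd\xi=\pi\sum_{n}f^2(n\pi)<\infty$. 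Before the main estimate I would dispose of the easy range $s\le 0$: since $f\in V_F\subset L^2(\mathbb{R})$ we have $f\in H^0(\mathbb{R})$, whence $\nu_2(f)\ge 0\ge s$ and nothing is to prove. So assume $s>0$.

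The heart of the proof is to insert $\hat f=mH$ into the defining integral \eqref{ee1}. Since the intervals $I_n$ are pairwise disjoint, $H(\xi)^2=\sum_{n\in\mathbb{Z}_+}b_n^2\chi_{I_n}(\xi)$, and therefore
\begin{equation*}
\int_{\mathbb{R}}|\hat f(\xi)|^2(1+|\xi|^2)^s\dd\xi=\sum_{n\in\mathbb{Z}_+}b_n^2\int_{I_n}|m(\xi)|^2(1+|\xi|^2)^s\dd\xi .
\end{equation*}
On $I_n$ one has $|\xi|<n+1$, so for $s>0$ the weight is controlled by $(1+|\xi|^2)^s\le(1+(n+1)^2)^s$, which I pull out of the inner integral.

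The crucial observation is a covering identity: $I_n=(-(n+1),-n]\cup[n,n+1)$ has total length $2$ and, reduced modulo $2$, covers exactly one full period of $m$ (one checks the even and odd cases of $n$ separately, up to sets of measure zero). Combined with the $2$-periodicity of $m$ this yields $\int_{I_n}|m(\xi)|^2\dd\xi=\int_0^2|m(\xi)|^2\dd\xi=:C_m$, independent of $n$. Hence the whole integral is bounded by $C_m\sum_{n\in\mathbb{Z}_+}b_n^2(1+(n+1)^2)^s$, and it remains only to compare this series with the hypothesis \eqref{ee2}. Separating the finite term $n=0$ and using $1+(n+1)^2\le 5n^2$ for $n\ge 1$, one gets $(1+(n+1)^2)^s\le 5^s n^{2s}$ there, so the series is at most $b_0^2\,2^s+5^s\sum_{n\ge1}b_n^2 n^{2s}<\infty$ by \eqref{ee2}. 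This shows $f\in H^s(\mathbb{R})$ and therefore $\nu_2(f)\ge s$.

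I expect the only genuinely delicate step to be the periodicity/covering identity $\int_{I_n}|m|^2=\int_0^2|m|^2$: this is where the specific geometry of the filter's support intervals $I_n$ must be matched against the period of the sampling symbol $m$, and getting the endpoints and the even/odd parity of $n$ right (modulo null sets) requires a little care. Everything else is a routine weight estimate on $I_n$ together with an elementary comparison of series.
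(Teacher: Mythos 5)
Your proof is correct and follows essentially the same route as the paper's: write $\hat f=mH$ via the spectral characterization \eqref{eqn:spectra}, split the Sobolev integral over the intervals $I_n$, pull the weight out as a constant multiple of $n^{2s}$, and use the $2$-periodicity of $m$ (equivalently, orthogonality of the exponentials $\e^{-\im k\pi\xi}$ on $I_n$) to see that $\int_{I_n}|m|^2$ is a constant independent of $n$. If anything you are slightly more careful than the paper, which uses the bound $1+\xi^2\le 3n^2$ on $I_n$ (false for $n=0,1$ and reversed in direction when $s<0$), whereas you dispose of $s\le 0$ up front and treat the $n=0$ term separately.
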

\begin{proof}
By equations \eqref{eqn:general-H} and \eqref{eqn:spectra} we obtain that
\begin{eqnarray*}
  \displaystyle\int_{\mathbb{R}}|\hat{f}(\xi)|^{2}(1+|\xi|^{2})^{s}\dd{\xi}
 &=& \sum_{n\in \mathbb{Z}_+}b_n^2 \int_{I_n}\left( \sum_{k\in \mathbb{Z}}f(k\pi)\e^{-\im k\pi \xi} \right)^2(1+\xi^2)^s \dd{\xi}.
\end{eqnarray*}
Noting   when $\xi\in I_n$, $n\le |\xi|< n+1$ for each $n\in \mathbb{Z}_+$, which implies that  $1+\xi^2\le 3n^2$, for $n\in \mathbb{Z}_+$. Consequently, we deduce that
\begin{eqnarray*}
  \displaystyle\int_{\mathbb{R}}|\hat{f}(\xi)|^{2}(1+|\xi|^{2})^{s}\dd{\xi}
 &\le& \sum_{n\in \mathbb{Z}_+}3^sb_n^2n^{2s}  \int_{I_n}\left( \sum_{k\in \mathbb{Z}}f(k\pi)\e^{-\im k\pi \xi} \right)^2  \dd{\xi}\\
 &=& \sum_{n\in \mathbb{Z}_+}3^sb_n^2n^{2s} \sum_{k\in \mathbb{Z}}f^2(k\pi),
\end{eqnarray*}
where in the last  equality again we have  used the orthogonality of the set $\{\e^{-\im k\pi \xi} : k\in \mathbb{Z}  \}$ on $I_n$, $n\in \mathbb{Z}_+$.  Now apply equation \eqref{eqn:parseval} we obtain that
\begin{eqnarray*}
  \displaystyle\int_{\mathbb{R}}|\hat{f}(\xi)|^{2}(1+|\xi|^{2})^{s}\dd{\xi}
 \le & \frac{3^s}{\pi}\frac{\|f\|^2_{L^2(\mathbb{R})}}{\| \boldsymbol{b} \|^2_{l^2(\mathbb{Z}_+)}}\sum_{n\in \mathbb{Z}_+}b_n^2n^{2s}<\infty
\end{eqnarray*}
by the assumption \eqref{ee2}.
\end{proof}

For example, consider the space $V_F$ associated with the analytic function $F$ in equation \eqref{analfu} such that  the sequence ${\boldsymbol b}$ is a geometric sequence with $b_n=a^n(1-a)$, $n\in \mathbb{Z}_+$, where $
a\in (-1, 1)$.   The    series
$\displaystyle\sum^{\infty}_{n=0}n^{2s}a^{2n}(1-a)^{2}\nonumber$
 converges for every $s\in \mathbb{R}$.
Therefore,   we have the Sobolev smoothness
$\nu_{2}(f)=\infty$. In particular, when $a=0$, the space $V_F$ degenerates to the space of bandlimited functions, and its Sobolev smoothness is   infinity.

If $ {\boldsymbol b}$ is given by $ b_n= \frac{1}{(n+1)^{3}} , \ {n\in \mathbb{Z}_+}$, the series
$\displaystyle\sum^{\infty}_{n=0}n^{2s}\frac{1}{(2n+1)^{6}}$
converges if and only if $6-2s>1$, that is, $s<2.5$, which implies that
$\nu_{2}(f)=2.5$ in this case.

%\bibliography{H:/Research/wang-bib}
% \bibliographystyle{plain}

\end{document}